\newtheorem{theorem}{Theorem}
\newtheorem{lemma}{Lemma}
\newtheorem{corollary}{Corollary}
\newtheorem{definition}{Definition}
\newtheorem{remark}{Remark}
\newtheorem{example}{Example}
\newcommand{\beqno}{ \begin{equation*} }
\newcommand{\eeqno}{ \end{equation*} }
\newcommand{\beq}{ \begin{equation} }
\newcommand{\eeq}{ \end{equation} }
\newcommand{\calM}{\mathcal{M}}
\newcommand{\calC}{\mathcal{C}}
\newcommand{\bbf}{\mathbb{F}}
\begin{document}

\title{Repairable Replication-based Storage Systems Using Resolvable Designs}

\author{\IEEEauthorblockN{Oktay Olmez}
\IEEEauthorblockA{Department of Mathematics,\\
Iowa State University,\\
Ames, Iowa 50011. \\
Email: oolmez@iastate.edu}
\and
\IEEEauthorblockN{Aditya Ramamoorthy}
\IEEEauthorblockA{Department of Electrical and Computer Engineering,\\
Iowa State University,\\
Ames, Iowa 50011. \\
Email: adityar@iastate.edu}
\thanks{This work was supported in part by NSF grant CCF-1018148.}
}

\maketitle
\begin{abstract}
We consider the design of regenerating codes for distributed storage systems at the minimum bandwidth regeneration (MBR) point. The codes allow for a repair process that is exact and uncoded, but table-based. These codes were introduced in prior work and consist of an outer MDS code followed by an inner fractional repetition (FR) code where copies of the coded symbols are placed on the storage nodes. The main challenge in this domain is the design of the inner FR code.

In our work, we consider generalizations of FR codes, by establishing their connection with a family of combinatorial structures known as resolvable designs. Our constructions based on affine geometries, Hadamard designs and mutually orthogonal Latin squares allow the design of systems where a new node can be exactly regenerated by downloading $\beta \geq 1$ packets from a subset of the surviving nodes (prior work only considered the case of $\beta = 1$). Our techniques allow the design of systems over a large range of parameters. Specifically, the repetition degree of a symbol, which dictates the resilience of the system can be varied over a large range in a simple manner. Moreover, the actual table needed for the repair can also be implemented in a rather straightforward way. Furthermore, we answer an open question posed in prior work by demonstrating the existence of codes with parameters that are not covered by Steiner systems.

\end{abstract}
\section{Introduction}
Large scale data storage systems are becoming ubiquitous in recent years. The availability of low cost storage media such as magnetic disks have fueled the growth of various applications such as Facebook, Youtube etc. These applications require a massive amount of data to be stored and accessed in a distributed manner. In these systems it is often the case that the individual storage nodes are unreliable. Thus, the integrity of the data and the speed of the data access needs to be maintained even under the presence of such unreliable storage nodes. This issue is typically handled by introducing redundancy in the storage system. For instance, one could replicate data across multiple nodes or use  Maximum Distance Separable (MDS) codes such as Reed-Solomon codes that allow for a better reliability at the same redundancy.

However, the {\it large scale distributed} nature of the systems under consideration introduces another issue. Namely, if a given storage node fails, it need to be regenerated so that the new system continues to have the properties of the original system. It is of course desirable to perform this regeneration in a distributed manner and download as little data as possible from the existing nodes. The problem of regenerating codes was introduced by Dimakis et al. \cite{Dim}. The authors demonstrated a fundamental tradeoff between the amount of data stored at each node (storage capacity) and the amount of data that needs to be downloaded for regenerating a failed node (repair bandwidth).

In particular, consider a distributed storage system (DSS) that consists of $n$ storage nodes, each of which stores $\alpha$ packets. A given user needs to have the ability to reconstruct the stored file by contacting any $k$ nodes; this is referred to as the MDS property of the system. Suppose that a given node fails. The DSS needs to be repaired by introducing a new node. This node should be able to contact any $d \geq k$ surviving nodes and download $\beta$ packets from each of them for a total repair bandwidth of $\gamma = d \beta$ packets. The new DSS should continue to have the MDS property. The work of \cite{Dim} considered the case of {\it functional repair}, where the new node needs to be functionally equivalent to the failed node. It was shown that this could be achieved by the usage of random network coding. In particular, under functional repair the entire storage vs. repair bandwidth curve is known exactly. One can also consider {\it exact repair} where the new node should be able to recreate the contents of the failed node (see \cite{RSKR09, RSKR10}). Two points on the curve deserve special mention and are arguably of most interest from a practical perspective. The minimum bandwidth regenerating (MBR) point refers to the point where the repair bandwidth, $\gamma$ is minimum. Likewise, the minimum storage regenerating (MSR) point refers to the point where the storage per node is minimized.

Much of the existing work in the area of DSS considers {\it coded} repair where the surviving nodes need to compute linear combinations of all their existing packets. It is well recognized that the read/write bandwidth of machines is much lower than the network bandwidth. Thus, this process induces undesirable latencies in the repair process. The process can also be potentially memory intensive if the packets comprising the file are large. Motivated by these issues, in \cite{RR}, El Rouayheb and Ramchandran considered the following variant of the DSS problem. The DSS needs to satisfy the property of {\it exact} and {\it uncoded} repair, i.e., the regenerating node needs to produce an exact copy of the failed node by simply downloading packets from the surviving nodes. This allows the entire system to work without requiring any computation at the surviving nodes. In addition they considered systems that are resilient to multiple $(> 1)$ failures. However, the DSS only has the property that the repair can be conducted by contacting some set of $d$ nodes, i.e., unlike the original setup, repair is not guaranteed by contacting any set of $d$ nodes. This is reasonable as most practical systems operate via a table-based repair, where the new node is provided information on the set of surviving nodes that it needs to contact. The work of \cite{RR} proposed a construction whereby an outer MDS code is concatenated with an inner ``fractional repetition" code of a certain degree. The main challenge here is to design the inner fractional repetition code in a systematic manner.

The work of \cite{RR} primarily considered fractional repetition (FR) codes that result from Steiner systems, which are an instance of a combinatorial design. Subsequently, Koo and Gill \cite{kooG11} considered the usage of finite projective planes for the design of these codes.  Both \cite{RR} and \cite{kooG11}, consider fractional repetition codes where the new node downloads exactly one packet (i.e., $\beta = 1$) from the surviving nodes that are contacted. In this work we study the design of fractional repetition codes in more generality.
\subsection{Main Contributions}
In this work we consider \underline{REP}airable \underline{RE}plication-based \underline{S}torage \underline{S}ystems \underline{U}sing \underline{RE}solvable \underline{D}esigns, abbreviated as REPRESSURED codes.
REPRESSURED codes are more general than fractional repetition codes as the new node has the flexibility of downloading $\beta \geq 1$ packets from the surviving nodes. 
Our design is based on combinatorial structures called resolvable designs \cite{St}. Our work makes the following contributions.
\begin{itemize}
\item Our constructions based on affine geometries and Hadamard designs allow for a large class of codes where $\beta \geq 1$.
\item The work of \cite{RR} considers Steiner systems where parameters such as the repetition degree of each packet are fixed a priori. In contrast, our code design allows the system designer to vary the repetition degree within a large range in a simple manner.
\item We resolve an open question posed in \cite{RR}, by showing the existence of FR codes that have a repetition degree greater than two, that cannot be constructed by Steiner systems.
\item The systems under consideration require table-based repair, whereby a table of nodes that need to be contacted under the various failure patterns needs to be maintained. As will be evident, our code design approach is such that this table can be maintained in a very simple manner.
\end{itemize}
This paper is organized as follows. Section \ref{sec:problem_form} contains a formal discussion of the problem formulation. Section \ref{sec:affine_resolv} and Section \ref{sec:general_resolv} discuss the design of FR codes from resolvable designs and Latin squares respectively. We conclude the paper with a comparison with existing work and discussion of future issues in Section \ref{sec:conc_remarks}.
\section{Problem Formulation}
\label{sec:problem_form}
The DSS is specified by parameters $(n,k,d)$ where $n$ - number of storage nodes, $k$ - number of nodes to be contacted for recovering the file and $d \geq k$ is the number of nodes to be contacted in order to regenerate a failed node. The storage capacity of each node is denoted by $\alpha$. In case of repair, the new node downloads $\beta$ packets from each surviving node, for a total of $\gamma = d \beta$ bits. Let $\calM$ denote the size of file being stored on the DSS. Under functional repair, it is known that at the MBR point, $\alpha = \gamma = \frac{2 \calM d}{2kd - k^2 + k}$.

We consider the design of fractional repetition codes that are best explained by means of the following example \cite{RR} with $(n,k,d) = (5,3,4)$ in the discussion below.  
\begin{example}

\begin{figure} [ht]
\centering
\includegraphics[scale=0.45]{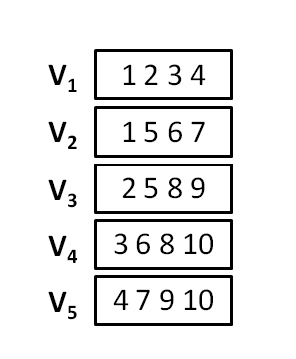}
\caption{A DSS with $(n,k,d) = (5,3,4)$. Each node contains a subset of the packets from $\{y_1, \dots, y_{10}\}$. Node $V_1$ for instance contains symbols $y_i, i = 1, \dots, 4$.}
\label{DSS-(5,3,4)}
\end{figure}

Consider a file of $\calM = 9$ packets $(x_1, \dots, x_9) \in \mathbb{F}_q^9$ that needs to stored on the DSS. We use a $(10,9)$ MDS code that outputs $10$ packets $y_i = x_i, i = 1, \dots, 9$ and $y_{10} = \sum_{i=1}^9 x_i$. The coded packets $y_1, \dots, y_{10}$ are placed on $n=5$ storage nodes as shown in Fig. \ref{DSS-(5,3,4)}. This placement specifies the inner fractional repetition code. It can be observed that each $y_i$ is repeated $\rho = 2$ times and the total number of symbols $\theta = 10$. Any user who contacts any $k=3$ nodes can recover the file (using the MDS property). Moreover, it can be verified that if a node fails, one packet each can be downloaded from the four surviving nodes, i.e., $\beta =1 $ and $d = 4$, so that $\gamma = 4$.
\end{example}
Thus, the approach uses an MDS code to encode a file consisting of a certain number of symbols. Let $\theta$ denote the number of encoded symbols. Copies of these symbols are placed on the $n$ nodes such that each symbol is repeated $\rho$ times and each node contains $\alpha$ symbols. Moreover, if a given node fails, it can be exactly recovered by downloading $\beta$ packets from some set of $d$ surviving nodes, for a total repair bandwidth of $\gamma=d\beta$. It is to be noted that in this case $\alpha = \gamma$, i.e., these schemes operate at the MBR point. In the example above, $\beta = 1$, so that $\alpha = d$. However, one can consider systems with $\beta > 1$ in general.

In this work we propose the construction of several fractional repetition codes. 
Before introducing the formal definition of a fractional repetition (FR) code we need the notion of $\beta$-recoverability. Let $[n]$ denote the set $\{1, 2, \dots, n\}$.
\begin{definition}
Let $\Omega = [\theta]$ and $V_i, i = 1, \dots, d$ be subsets of $\Omega$. Let $V = \{V_1, \dots, V_d\}$ and consider $A \subset \Omega$ with $|A| = d\beta$. We say that $A$ is $\beta$-recoverable from $V$ if there exists $B_i \subseteq V_i$ for each $i= 1, \dots, d$ such that $B_i \subset A, |B_i| = \beta$ and $\displaystyle \cup_{i=1}^d B_i = A$. 

\end{definition}
\begin{definition}
\label{defn:fr_code}
A fractional repetition (FR) code $\calC = (\Omega, V)$ for a $(n,k,d)$ DSS (where $d \geq k$) with repetition degree $\rho$ and normalized repair bandwidth $\beta = \alpha/d$  ($\alpha$ and $\beta$ are positive integers) is a set of $n$ subsets $V=\{V_1, \dots, V_n\}$ of a symbol set $\Omega = [\theta]$ with the following properties.
\begin{itemize}
\item[(a)] The cardinality of each $V_i$ is $\alpha$.
\item[(b)] Each element of $\Omega$ belongs to $\rho$ sets in $V$.
\item[(c)] Let $V^{surv}$ denote any 
$(n- \rho_{res})$ sized subset of $V$ and $V^{fail} = V \setminus V^{surv}$. Each $V_j \in V^{fail}$ is $\beta$-recoverable from some $d$-sized subset of $V^{surv}$.
\end{itemize}
The value of $\rho_{res}$ is a measure of the resilience of the system to node failures.
The code rate is defined as
\begin{align*}
\displaystyle R_{\calC}(k) &= \min_{I \subset [n],|I| = k} |\cup_{i \in I} V_i|,
\end{align*}
where $[n] = \{1, \dots, n\}$.
\end{definition}
It can be observed that $R_{\calC}(k)$ corresponds to the maximum filesize that can be obtained with a certain value of $k$. We remark that in \cite{RR}, only FR codes with $\beta = 1$ were studied. In this case the requirement (c) in Definition \ref{defn:fr_code} is automatically satisfied and it can be seen that the system is resilient to $\rho - 1$ failures.

Our proposed constructions aim to maximize the file size $\calM$ given the parameters $(n, k, d, \alpha)$\footnote{It can be seen that these further specify $\beta = \alpha/d$. Furthermore, it can be seen that $n d = \theta \rho$.} while ensuring a certain level of failure resilience. In the case of MBR constructions the maximum filesize under the Dimakis et al. model is known to be $k\alpha - \beta \binom{k}{2}$. Accordingly, we call a FR code {\it universally good} if the code rate $R_{\calC}(k) \geq k\alpha - \binom{k}{2} \beta$ (\cite{RR} used this terminology when $\beta = 1$). As will be evident, all our constructions in this work are universally good.



\section{REPRESSURED codes from Affine Resolvable Designs}
\label{sec:affine_resolv}
We now discuss the construction of FR codes from resolvable designs. As we shall see this construction allows us to easily vary the repetition degree $\rho$ and the normalized repair bandwidth $\beta$.
\begin{definition}\label{resolvable fractional repetition code} Let $\calC = (\Omega, V)$ where $V = \{V_1, \dots, V_n\}$ be a FR code. A subset $P \subset V$ is said to be a parallel class if for $V_i \in P$ and $V_j \in P$ with $i \neq j$ we have $\displaystyle V_i \cap V_j = \emptyset$ and $\cup_{\{j : V_j \in P\}} V_j = \Omega$. A partition of  $V$ into $r$ parallel classes is called a resolution. If there exists at least one resolution then the code is called a resolvable fractional repetition code.
\end{definition}
The properties of a resolvable FR code are best illustrated by means of the following example.
\begin{example}
Consider a DSS construction with parameters $\alpha = 3, \theta = \alpha^2 = 9, \rho = 2$ and $\beta = 1$. Suppose that we arrange the symbols in $\Omega$ in a $\alpha \times\alpha$ array $A$ shown below.
$$A=\begin{array}{ccc}
1&2&3\\
4&5&6\\
7&8&9
\end{array}$$
Let the rows and the columns of $A$ form the nodes in the FR code $\calC$ (see Fig. \ref{DSS-(6,3,3)}), thus $n=6$. It is evident that there are two parallel classes in $\calC$, $P^r = \{V_1, V_2, V_3\}$ (corresponding to rows) and $P^c = \{V_4, V_5, V_6\}$ (corresponding to columns). As $\rho = 2$, this code can tolerate one failure.
\begin{figure} [t]
\centering
\includegraphics[scale=0.45]{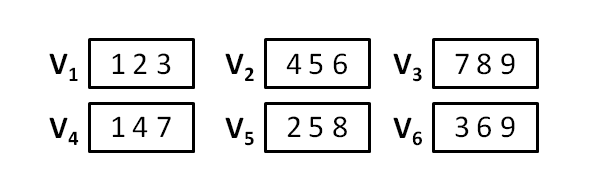}
\caption{A (6,3,3) REPRESSURED code. Note that $\{V_1, V_2, V_3\}$ and $\{V_4, V_5, V_6\}$ form parallel classes.}
\label{DSS-(6,3,3)}
\end{figure}
It can be observed that for $V_i \in P^r$ and $V_j \in P^c$, we have $|V_i \cap V_j| = 1$. Using this we can compute the code rate when $k=3$, $R_{\calC}(3)$ as follows. Let $a + b=3$ with $a \geq b$. Then, the number of distinct symbols in a set of $3$ nodes from $\calC$ is
\begin{equation*}
3a+(3-a)(3-a)= a^2 + 9 - 3a,
\end{equation*}
where $a$ nodes are from $P^r$ and $(3-a)$ nodes are from $P^c$. This is minimized when $a = 2$. Thus, $R_{\calC}(3) = 7$ and it can be seen that the construction is universally good.
\end{example}


It can be seen that given a resolvable FR code $\calC = (\Omega, V)$ with $r$ parallel classes, one can obtain a resolvable FR code $\calC'$ with repetition degree $\rho \leq r$, simply by choosing the node set in $\calC'$ to be any $\rho$ distinct parallel classes from $\calC$. Moreover, the recovery process when at most $\rho -1$ nodes are in failure and $\beta = 1$ is also quite simple. Specifically, it is clear that upon $\rho -1$ node failures, there is at least one parallel class in $\calC'$ that remains intact. As all symbols from $\Omega$ are represented in any parallel class, any failed node can be regenerated by contacting the nodes in the remaining class.

%
%

We now present explicit constructions of resolvable FR codes by leveraging the properties of combinatorial designs. For an in-depth  discussion of combinatorial designs, see \cite{St}. 
\begin{definition}
A $(\theta,\rho, \alpha,\lambda)$ balanced incomplete block design (BIBD) is
a pair $(\Omega, V )$, where $\Omega$ is a $\theta$-element set and $V$ is a
collection of $\alpha$-subsets of $\Omega$, called blocks, such that $|V| = n$; every element of $V$ is contained in exactly $\rho$ blocks and every $2$-subset of $\Omega$ is contained in exactly $\lambda$ blocks.
\end{definition}
Let $n$ denote the number of blocks. It can be shown that for a BIBD, the following relations hold.
\begin{align}
n \alpha &= \theta \rho, \label{bibd_eq_1}\\
\rho(\alpha - 1) &= \lambda(\theta -1). \label{bibd_eq_2}
\end{align}

It can be observed that a BIBD is essentially a FR code, with the additional property that every $2$-subset of $\Omega$ is contained in exactly $\lambda$ blocks. Likewise we can define a resolvable $(\theta,\rho,\alpha,\lambda)$-BIBD (analogous to a resolvable FR code) and the notions of a parallel class and resolution. Namely, a parallel class is a subset of disjoint blocks from $V$ whose union is $\Omega$ and a partition of $V$ into $\rho$ parallel classes is a resolution.
\begin{definition}
A $S(t, \alpha, \theta)$ Steiner system is a set $\Omega$ of $\theta$ elements and a collection of subsets of $\Omega$ of size $\alpha$ called blocks such that any $t$ subset of the symbol set $\Omega$ appears in exactly one of the blocks.
\end{definition}
It can be seen that a $S(2, \alpha, \theta)$ Steiner system is a $(\theta,\rho,\alpha,1)$-BIBD where $\displaystyle \rho=\frac{\theta-1}{\alpha-1}$.

\begin{lemma} {\it Bose's Inequality} \cite{Bose1}. Suppose that there exists a resolvable $(\theta,\rho,\alpha,\lambda)$-BIBD. Then, $n \geq \theta + \rho -1$.
\end{lemma}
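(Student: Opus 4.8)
The plan is to prove the inequality by a linear-algebra argument on the point--block incidence matrix, combining a rank computation (Fisher's inequality) that pins down $\mathrm{rank}(N)=\theta$ with the parallel-class structure of the resolution, which forces the kernel of $N$ to have dimension at least $\rho-1$.

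First I would set up the $\theta\times n$ incidence matrix $N$ of the design, with rows indexed by the $\theta$ points of $\Omega$ and columns by the $n$ blocks in $V$, where $N_{x,j}=1$ exactly when point $x$ lies in block $V_j$. The defining BIBD conditions translate into the single matrix identity $NN^{T}=(\rho-\lambda)I+\lambda J$, where $I$ is the $\theta\times\theta$ identity and $J$ the all-ones matrix: the diagonal entries count the $\rho$ blocks through each point, and the off-diagonal entries count the $\lambda$ blocks through each pair. I would then diagonalize $NN^{T}$. It has $\mathbf 1$ as an eigenvector with eigenvalue $(\rho-\lambda)+\lambda\theta$, while the orthogonal complement of $\mathbf 1$ is an eigenspace with eigenvalue $\rho-\lambda$ of multiplicity $\theta-1$. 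Using \eqref{bibd_eq_2} the top eigenvalue simplifies to $\rho\alpha$. Since the design is incomplete, $\alpha<\theta$, and then \eqref{bibd_eq_2} gives $\rho>\lambda$, so every eigenvalue is strictly positive; thus $NN^{T}$ is nonsingular and $\mathrm{rank}(N)=\theta$. This is exactly Fisher's inequality $n\ge\theta$.

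Next I would exploit the resolution to bound the kernel of $N$ from below. Let $P_1,\dots,P_\rho$ be the $\rho$ parallel classes, and for each $\ell$ let $\mathbf u_\ell\in\mathbb R^{n}$ be the indicator vector of the set of columns (blocks) belonging to $P_\ell$. Because a parallel class partitions $\Omega$, summing the corresponding columns of $N$ covers each point exactly once, i.e. $N\mathbf u_\ell=\mathbf 1$ for every $\ell$. Consequently each difference $\mathbf u_\ell-\mathbf u_1$ lies in $\ker(N)$. The vectors $\mathbf u_1,\dots,\mathbf u_\rho$ have pairwise disjoint supports—each block lies in exactly one parallel class—so they are linearly independent, whence the $\rho-1$ differences $\mathbf u_2-\mathbf u_1,\dots,\mathbf u_\rho-\mathbf u_1$ are linearly independent as well. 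Hence $\dim\ker(N)\ge\rho-1$.

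Finally I would combine the two bounds via rank--nullity: $n=\mathrm{rank}(N)+\dim\ker(N)\ge\theta+(\rho-1)$, which is the claimed inequality. I expect the only genuinely delicate point to be the justification that $\mathrm{rank}(N)=\theta$ rather than merely $\mathrm{rank}(N)\le\theta$; this rests on the strict positivity of the eigenvalues of $NN^{T}$, and hence on the strict inequality $\rho>\lambda$, which holds precisely because the blocks are proper subsets of $\Omega$. Everything else—the identity for $NN^{T}$, the eigenvalue computation, and the linear independence of the parallel-class indicators—is routine once the incidence matrix is in place.
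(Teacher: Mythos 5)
Your proof is correct and complete; note that the paper itself gives no proof of this lemma (it is quoted directly from Bose's 1942 paper), and your argument is precisely the classical one behind that citation. The key steps all check out: $NN^{T}=(\rho-\lambda)I+\lambda J$ is positive definite since $\rho>\lambda$ (which follows from $\alpha<\theta$ together with $\rho(\alpha-1)=\lambda(\theta-1)$), giving $\mathrm{rank}(N)=\theta$; a resolution of a BIBD has exactly $\rho$ parallel classes (each point lies in one block per class), so the disjointly supported indicators $\mathbf{u}_1,\dots,\mathbf{u}_\rho$ yield $\rho-1$ independent differences in $\ker(N)$; and rank--nullity then gives $n=\theta+\dim\ker(N)\geq\theta+\rho-1$.
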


Within the class of resolvable designs we will primarily be interested in the class of affine resolvable designs for which $n = \theta + \rho - 1$.

\subsection{Affine geometry based constructions}
First, we will discuss the construction of a resolvable $(q^2, q+1, q,1)$-BIBD. This is also known as the affine plane of order $q$.


We can explicitly construct affine planes when $q$ is a prime power. Let $q$ be a prime power and $\mathbb{F}_q$ denote the finite field of order $q$. We define the symbol set $\Omega = \mathbb{F}_q \times \mathbb{F}_q$. For any $a,b \in \mathbb{F}_q $, define a block  $V_{a,b}=\{(x,y)\in \Omega: y=ax+b\}$. For any $c \in \mathbb{F}_q$, define  $V_{\infty,c}=\{(c,y)\in \Omega: y \in\mathbb{F}_q \}$. So there are $q^2+q$ blocks which we can partition into $q+1$ parallel classes of each size $q$. Specifically, fix $a \in \mathbb{F}_q$ then $\{V_{a,b}: b\in \mathbb{F}_q\}$ forms the $q$ parallel classes and the last parallel class is given by  $\{V_{\infty,c}: c\in \mathbb{F}_q\}$.
\begin{figure} [t]
\centering
\includegraphics[scale=0.20]{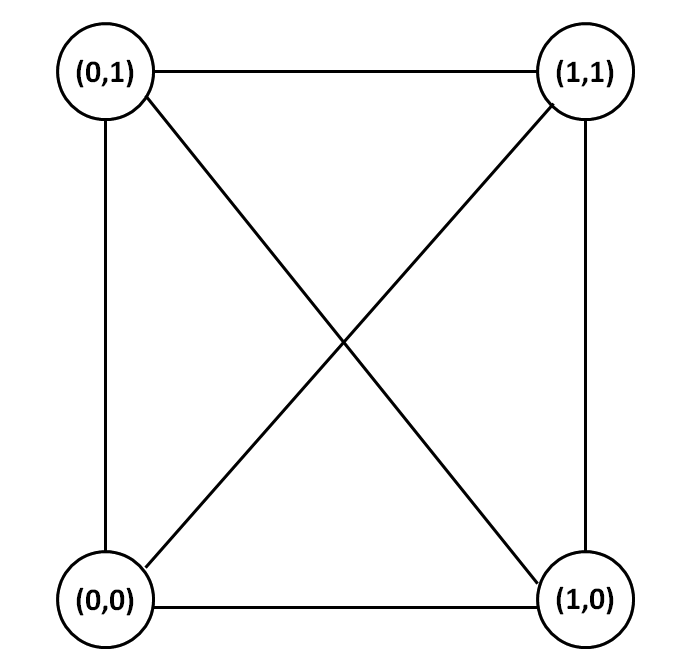}
\caption{A (4, 2, 2) REPRESSURED code from the affine plane of order 2. Each storage node consists of symbols that lie on a line, e.g., $\{(0,0), (0,1)\}$ is a storage node.}
\label{order2}
\end{figure}
\begin{example} By using the above construction we can construct an affine plane of order 2 (see Fig. \ref{order2}).

The set of symbols is $\Omega = \mathbb{F}_2 \times \mathbb{F}_2$ and the blocks are as follows:
$$V_{0,0}=\{(0,0),(1,0)\}$$
$$V_{0,1}=\{(0,1),(1,1)\}$$
$$V_{1,0}=\{(0,0),(1,1)\}$$
$$V_{1,1}=\{(0,1),(1,0)\}$$
$$V_{\infty,0}=\{(0,0),(0,1)\}$$
$$V_{\infty,1}=\{(1,0),(1,1)\}$$
\end{example}

Affine planes are also considered in \cite{RR} since any affine plane is a Steiner system. They also mentioned in \cite{kooG11}. However, here we have the flexibility of constructing fractional repetition codes with repetition $\rho \leq q+1$ by choosing any $\rho$ parallel classes. For instance, Example 2 can be also constructed by considering only two parallel classes of an affine plane of order $\alpha$ when $\alpha$ is a prime power.

Next, we discuss affine geometries which yield a larger class of constructions. Let $q$ be a prime power, $m \geq 2$ and $\Omega = \mathbb{F}_q^m$. Let $1 \leq \delta \leq m-1$. Note that $\Omega$ is an $m$-dimensional vector space over $\mathbb{F}_q$. A $\delta$-flat is a solution set to a system of $m- \delta$ independent linear equations that can be homogeneous or non-homogeneous. The set $\Omega$ and the set of all $\delta$-flats of $\Omega$ comprise the $m$-dimensional affine geometry over $\mathbb{F}_q$, denoted by $AG_m(q)$. It turns out that one can generate a large class of resolvable designs by considering $AG_m(q)$. Let ${m \brack \delta}_q$ denote the Gaussian coefficient, so that
\begin{align*}
{m \brack \delta}_q = \begin{cases} \frac{(q^m-1) (q^{m-1} - 1) \dots (q^{m-\delta+1} - 1)}{(q^\delta -1)(q^{\delta-1} - 1) \dots (q-1)} & \text{~if~} \delta \neq 0\\
1 &  \text{~if~} \delta = 0
\end{cases}
\end{align*}
\begin{theorem}
\label{thm:resolvable_bibd} \cite{St} Let $V$ denote the set of all $\delta$-flats in $AG_m(q)$. Then $\Omega = \mathbb{F}_q^m$ and $V$ form a resolvable $(q^m, \rho, q^\delta,\lambda)$ design with $n = q^{m-\delta} {m \brack \delta}_q, \rho = {m \brack \delta}_q$ and $\lambda = {m-1 \brack \delta-1}_q$.
\end{theorem}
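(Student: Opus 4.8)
The plan is to reinterpret the $\delta$-flats of $AG_m(q)$ geometrically as cosets of $\delta$-dimensional linear subspaces, which makes all three design parameters, as well as resolvability, fall out of standard subspace counts. First I would observe that a $\delta$-flat $F$, being the solution set of $m-\delta$ independent linear equations, is precisely an affine subspace of dimension $\delta$: it is a coset $v + W$ of a unique $\delta$-dimensional linear subspace $W \subseteq \mathbb{F}_q^m$ (its direction space $W = \{x - y : x,y \in F\}$, equivalently the solution space of the associated homogeneous system), where $v$ is any point of $F$. Since $|W| = q^\delta$, this immediately gives the block size $\alpha = q^\delta$.

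Next I would establish the resolution together with the counts $n$ and $\rho$. For a fixed $\delta$-dimensional linear subspace $W$, its $q^m/q^\delta = q^{m-\delta}$ cosets are pairwise disjoint and cover $\Omega$, so they form a parallel class; conversely, every $\delta$-flat is a coset of exactly one such $W$. Hence the parallel classes are indexed by the $\delta$-dimensional linear subspaces of $\mathbb{F}_q^m$, of which there are ${m \brack \delta}_q$ by the standard subspace-counting interpretation of the Gaussian coefficient. This simultaneously yields the resolution (proving resolvability), the number of parallel classes $\rho = {m \brack \delta}_q$, and the total block count $n = q^{m-\delta} {m \brack \delta}_q$. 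Because the cosets of each $W$ partition $\Omega$, every point lies in exactly one flat of each parallel class, so each point lies in exactly ${m \brack \delta}_q = \rho$ flats, confirming the repetition degree.

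The one genuinely computational step, and the part I expect to be the main obstacle, is the value of $\lambda$, the number of $\delta$-flats through a fixed pair of distinct points $p, p'$. Here I would exploit the translation invariance of $AG_m(q)$: translation by $-p$ is a bijection on the set of $\delta$-flats that carries those through $p,p'$ to those through $0$ and $v := p' - p \neq 0$. A $\delta$-flat through the origin is exactly a $\delta$-dimensional linear subspace, so $\lambda$ equals the number of $\delta$-dimensional linear subspaces of $\mathbb{F}_q^m$ containing the fixed nonzero vector $v$. Passing to the quotient $\mathbb{F}_q^m / \langle v \rangle \cong \mathbb{F}_q^{m-1}$ establishes a bijection between such subspaces and the $(\delta-1)$-dimensional subspaces of $\mathbb{F}_q^{m-1}$, of which there are ${m-1 \brack \delta-1}_q$; hence $\lambda = {m-1 \brack \delta-1}_q$.

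Finally, as a consistency check I would verify the BIBD relations \eqref{bibd_eq_1} and \eqref{bibd_eq_2}: the relation $n\alpha = q^m {m \brack \delta}_q = \theta \rho$ is immediate, while ${m \brack \delta}_q (q^\delta - 1) = {m-1 \brack \delta-1}_q (q^m - 1)$ follows by cancelling the common factors $(q^{m-1}-1)\cdots(q^{m-\delta+1}-1)$ in the numerators and $(q^{\delta-1}-1)\cdots(q-1)$ in the denominators of the two Gaussian coefficients, confirming $\rho(\alpha-1) = \lambda(\theta-1)$ with $\alpha = q^\delta$ and $\theta = q^m$.
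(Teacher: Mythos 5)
Your proof is correct and complete: the identification of $\delta$-flats with cosets of $\delta$-dimensional subspaces, the indexing of parallel classes by direction subspaces, the translation-and-quotient computation of $\lambda$, and the consistency check against relations (\ref{bibd_eq_1})--(\ref{bibd_eq_2}) are all sound. The paper itself gives no proof --- it quotes the result from \cite{St} --- and your argument is precisely the standard coset/subspace-counting proof found there, so there is nothing to flag.
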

The case of $m=2, \delta = 1$ corresponds to the case of affine planes that were discussed above. It can be shown that when $\delta = m-1$, using Theorem \ref{thm:resolvable_bibd}, we obtain affine resolvable designs with $n = \theta + \rho - 1$. In this case the DSS parameters are $\theta=q^m$, $\alpha=q^{m-1}$, $\rho=\frac{q^{m}-1}{q-1}$ and $n=q\rho$.
The design can be specified by means of the following algorithm.
\begin{itemize}
\item[(i)] Let $\Omega = \{(x_1,x_2, \cdots, x_m): x_i \in \mathbb{F}_q~\mbox{for}~i=1,2,\cdots, m \}$ be the symbol set. 
\item[(ii)] Find $\rho$, $(m-1)$-dimensional subspaces of $\bbf_q^m$ such that each of them contains the symbol $(0,0,\cdots, 0) \in \mathbb{F}_q^m$. Note that these subspaces of $\mathbb{F}_q^m$ are the solutions to homogeneous linear equations over $\mathbb{F}_q$ in $q$ variables. These $\rho$ subspaces are representatives of the $\rho$ different parallel classes.
\item[(iii)] Construct each parallel class by considering the additive cosets of its representative.
Let $R_1$ be a $(m-1)$-dimensional subspace corresponding a given homogenous equation. Take a symbol $u_1 \notin R_1$ and add $u_1$ to each symbol in $R_1$ to form a subspace $R_1'$ (which corresponds to a nonhomogeneous equation). We note that there are $q-1$ non-zero choices for $u_1$. Each choice of $u_1$ forms a new block. 

\end{itemize}
\begin{example} \cite{St} Let $q=3$ and $m=3$. The set of symbols is $\Omega = \mathbb{F}_3^3$ and there are 39 blocks which can be partitioned into 13 parallel classes. The representatives of the 13 parallel classes are as follows:
$$R_1 = \{000, 001, 002, 010, 020, 011, 012, 021, 022\}$$
$$ R_2 =\{000, 001, 002, 100, 200, 101, 102, 201, 202\} $$
$$R_3 = \{000, 001, 002, 110, 220, 111, 112, 221, 222\} $$
$$R_4 = \{000, 001, 002, 120, 210, 121, 122, 211, 212\}$$
$$R_5 = \{000, 010, 020, 100, 200, 110, 120, 210, 220\}$$
 $$R_6 = \{000, 010, 020, 101, 202, 111, 121, 212, 222\} $$
 $$R_7 = \{000, 010, 020, 102, 201, 112, 122, 211, 221\}$$
 $$R_8 = \{000, 011, 022, 100, 200, 111, 122, 211, 222\}$$
 $$R_9 = \{000, 011, 022, 101, 202, 112, 120, 210, 221\}$$
 $$R_{10} = \{000, 011, 022, 102, 201, 110, 121, 212, 220\}$$
 $$R_{11} = \{000, 012, 021, 100, 200, 112, 121, 212, 221\} $$
 $$R_{12} = \{000, 012, 021, 101, 202, 110, 122, 211, 220\}$$
 $$R_{13} = \{000, 012, 021, 102, 201, 111, 120, 210, 222\}$$.

 The other blocks are additive cosets of these 13 representatives. For example, the first parallel class consist of the following blocks:
 $$B_1 = \{000, 001, 002, 010, 020, 011, 012, 021, 022\}$$
 $$B_2 = \{100, 101, 102, 110, 120, 111, 112, 121, 122\}$$
 $$B_3 = \{200, 201, 202, 210, 220, 211, 212, 221, 222\}$$
\end{example}
The overlap between blocks from different parallel classes in the case of affine resolvable designs is known from the following result.
\begin{lemma}
\label{lemma:intersect_affine}
\cite{St} Any two blocks from different parallel classes of an affine resolvable $(\theta, \rho, \alpha, \lambda)$-BIBD intersect in exactly $\alpha^2/\theta$ symbols.
\end{lemma}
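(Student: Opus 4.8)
The plan is to fix two distinct parallel classes, say $P$ and $Q$, choose a block $A \in P$, and study the intersection numbers $x_\ell = |A \cap B_\ell|$ as $B_\ell$ ranges over the $s = \theta/\alpha$ blocks of $Q$. Because $Q$ partitions $\Omega$, it partitions $A$, so $\sum_{\ell=1}^{s} x_\ell = \alpha$. The target value $\alpha^2/\theta$ is exactly $\alpha/s$, i.e. the average of the $x_\ell$; hence the statement is equivalent to asserting that each $x_\ell$ equals this average. My approach is to prove this ``equidistribution'' by combining a double-counting identity with a convexity bound, and to show that the affine condition $n = \theta + \rho - 1$ is precisely what makes the bound tight.

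First I would establish a counting identity. Take any $2$-subset $\{u,v\} \subseteq A$. Since we have a BIBD, $\{u,v\}$ lies in exactly $\lambda$ blocks, and since the design is resolvable each such block belongs to exactly one parallel class and meets each class in at most one block (blocks inside a class are disjoint). Exactly one of the $\lambda$ blocks lies in $P$, namely $A$, so $\{u,v\}$ occurs together in exactly $\lambda - 1$ of the remaining $\rho - 1$ classes. Summing over the $\binom{\alpha}{2}$ pairs of $A$ gives, on one hand, $\binom{\alpha}{2}(\lambda - 1)$, and on the other hand $\sum_{Q \neq P} \sum_\ell \binom{x_\ell^{(Q)}}{2}$, where $x_\ell^{(Q)}$ are the intersection numbers of $A$ with class $Q$. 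This is the main identity.

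Next I would apply convexity. For a fixed sum $\sum_\ell x_\ell^{(Q)} = \alpha$ over $s$ terms, strict convexity of $t \mapsto \binom{t}{2}$ gives $\sum_\ell \binom{x_\ell^{(Q)}}{2} \geq s\binom{\alpha/s}{2} = \frac{\alpha(\alpha - s)}{2s}$, with equality only when all $x_\ell^{(Q)}$ coincide. Summed over the $\rho - 1$ classes $Q \neq P$, the right-hand side becomes $(\rho - 1)\frac{\alpha(\alpha - s)}{2s}$. The crux is the purely algebraic verification that this lower bound equals $\binom{\alpha}{2}(\lambda - 1)$ exactly when $n = \theta + \rho - 1$: using $\theta = s\alpha$ and $n = \rho s$, the affine condition rewrites as $\theta - 1 = \rho(s-1)$, which together with the BIBD relation $\rho(\alpha-1) = \lambda(\theta-1)$ yields $\lambda - 1 = \frac{\alpha - s}{s-1}$ and $\rho - 1 = \frac{s(\alpha-1)}{s-1}$; substituting shows both expressions equal $\frac{\alpha(\alpha-1)(\alpha-s)}{2(s-1)}$. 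Since the total (the identity) equals the sum of the per-class minima while each per-class term is at least its own minimum, every term must attain its minimum; strict convexity then forces $x_\ell^{(Q)} = \alpha/s = \alpha^2/\theta$ for all $\ell$ and all $Q \neq P$, which is the claim.

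I expect the main obstacle to be the bookkeeping of this algebraic identity, which is routine but must be carried out carefully to confirm that the convexity bound is tight \emph{precisely} under the affine hypothesis; this is the single place where $n = \theta + \rho - 1$ genuinely enters. A cleaner but more machinery-heavy alternative would be spectral: form the block-intersection matrix $M = N^\top N$, where $N$ is the point-block incidence matrix with $NN^\top = (\rho-\lambda)I + \lambda J$, note that the eigenvalues of $M$ are $\rho\alpha$, $\rho - \lambda$, and $0$ with multiplicity $n - \theta = \rho - 1$, and exhibit the $\rho - 1$ independent ``class-difference'' vectors (each class of blocks sums to the all-ones vector on $\Omega$) as a basis of $\ker M$; pinning down the off-diagonal blocks of $M$ from this kernel data again produces the constant $\alpha^2/\theta$. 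I would present the counting argument as the primary proof, since it is elementary and makes the role of the affine equality $n=\theta+\rho-1$ transparent.
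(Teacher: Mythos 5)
Your counting-plus-convexity argument is correct, but note that the paper does not actually prove this lemma: it is stated with a citation to Stinson \cite{St}, so there is no in-paper proof to compare against. Your proof is a valid, self-contained replacement, and it is essentially the classical Bose argument for the equality case of Bose's inequality (which the paper quotes as its Lemma~1). The details check out: with $s=\theta/\alpha$ blocks per class and $\rho$ classes, the affine condition $n=\rho s=\theta+\rho-1$ rewrites as $\theta-1=\rho(s-1)$, which combined with $\rho(\alpha-1)=\lambda(\theta-1)$ gives $\lambda-1=\frac{\alpha-s}{s-1}$ and $\rho-1=\frac{s(\alpha-1)}{s-1}$, and both sides of your identity do evaluate to $\frac{\alpha(\alpha-1)(\alpha-s)}{2(s-1)}$. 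The double count is also sound: blocks within a class are pairwise disjoint, so exactly one of the $\lambda$ blocks through a pair $\{u,v\}\subseteq A$ lies in $P$ (namely $A$), and each other class contributes at most one, giving $\sum_{Q\neq P}\sum_\ell \binom{x_\ell^{(Q)}}{2}=\binom{\alpha}{2}(\lambda-1)$; strict convexity of $t\mapsto\binom{t}{2}$ then forces every $x_\ell^{(Q)}=\alpha/s=\alpha^2/\theta$ once the aggregate bound is shown to be tight. Your proof has the additional merit of isolating exactly where the affine hypothesis enters, which the paper's bare citation obscures. One quibble: your spectral alternative is under-specified at the final step --- knowing that the class-difference vectors span $\ker M$ only yields the row-sum conditions $\sum_{B\in Q}|A\cap B|=\alpha$, which hold in any resolvable design and do not by themselves force the off-diagonal entries of $M$ to be constant; some further argument (e.g., a projection or variance computation, which essentially re-imports your convexity step) would be needed there. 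Since you present that route only as a sketch and the counting proof stands on its own, this does not affect the correctness of your submission.
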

\begin{corollary}
Let $P^1$ and $P^2$ be parallel classes in an affine resolvable $(\theta, \rho, \alpha, \lambda)$-BIBD. Let $\beta = \alpha^2/\theta$. Any block from $P^1$ is $\beta$-recoverable from $P^2$.
\end{corollary}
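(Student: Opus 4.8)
The plan is to read the recovery sets directly off the partition structure of $P^2$, using Lemma \ref{lemma:intersect_affine} to pin down their sizes. First I would recall from Definition \ref{resolvable fractional repetition code} that a parallel class is a collection of pairwise disjoint blocks whose union is all of $\Omega$. Writing $P^2 = \{W_1, \dots, W_d\}$, this says precisely that the $W_j$ partition $\Omega$; since every block has cardinality $\alpha$, there are exactly $d = \theta/\alpha$ of them, and this is the number of ``surviving'' blocks we contact.

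Next, let $A$ denote the block drawn from $P^1$. Because $A$ and each $W_j$ belong to distinct parallel classes, Lemma \ref{lemma:intersect_affine} applies and yields $|A \cap W_j| = \alpha^2/\theta = \beta$ for every $j$. I would therefore set $B_j := A \cap W_j$. By construction $B_j \subseteq W_j$ and $B_j \subseteq A$, and the size requirement $|B_j| = \beta$ is exactly what the lemma supplies. Since the $W_j$ partition $\Omega$, the $B_j$ are pairwise disjoint and
\[
\bigcup_{j=1}^{d} B_j \;=\; A \cap \Bigl( \bigcup_{j=1}^{d} W_j \Bigr) \;=\; A \cap \Omega \;=\; A,
\]
which is the covering condition in the definition of $\beta$-recoverability.

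The only remaining bookkeeping is to confirm the cardinality hypothesis $|A| = d\beta$ demanded by that definition. Here $|A| = \alpha$ while $d\beta = (\theta/\alpha)(\alpha^2/\theta) = \alpha$, so the two agree, and the $B_j$ form a disjoint cover of $A$ by $\beta$-subsets, one taken from each block of $P^2$. This exhibits the $B_i$ required by the definition and hence shows that $A$ is $\beta$-recoverable from $P^2$.

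I do not expect a substantive obstacle: once the intersection lemma is available, the statement is an immediate corollary, its entire content being the observation that restricting a single block to the cells of a partition produces equal-sized pieces that reassemble to the block. The one point meriting care is the alignment of notation, namely checking that the number $d = \theta/\alpha$ of blocks in $P^2$ is the same $d$ appearing in the $\beta$-recoverability definition, and that $\beta = \alpha^2/\theta$ is consistent with the relation $\beta = \alpha/d$ used elsewhere in the paper (indeed $\alpha/d = \alpha/(\theta/\alpha) = \alpha^2/\theta$).
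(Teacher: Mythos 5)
Your proposal is correct and follows essentially the same route as the paper's own proof: invoke Lemma \ref{lemma:intersect_affine} to get $|A \cap W_j| = \alpha^2/\theta = \beta$ for each block $W_j$ of $P^2$, then use the disjointness of the $\theta/\alpha$ blocks in the parallel class to see that the intersections $B_j = A \cap W_j$ form the required disjoint cover of $A$. Your write-up merely makes explicit the cardinality bookkeeping ($d = \theta/\alpha$, $|A| = \alpha = d\beta$) that the paper leaves implicit.
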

\begin{proof}
By Lemma \ref{lemma:intersect_affine} it is clear that the intersection between a block in $P^1$ and any block in $P^2$ is of size $\beta$. Next, there is no overlap between the blocks in $P^2$ and there exist $\theta/\alpha$ blocks in $P^2$; this gives us the result.
\end{proof}
Thus, for affine resolvable designs resulting from affine geometries, we have $\beta = \alpha^2/\theta = q^{m-2}$ and $d = q$.

In addition to the above examples, we emphasize that we can generate resolvable FR codes with a wide range of parameters as shown in Table \ref{table:ag_parameters}. 

\begin{table}[t]
\begin{center}
\begin{tabular}{|c| c | c | c | c |c|}
  \hline
 $m$&$\theta=q^m$&$n=q\rho$&$\alpha=q^{m-1}$&$\beta=\alpha^2/\theta$& $\rho=\frac{q^{m}-1}{q-1}$\\
  \hline
 4&16&30&8&4&15\\
 \hline
 5&32&62&16&8&31\\
 \hline
 6&64&126&32&16&63\\
 \hline
 3&27&39&9&3&13\\
 \hline
 4&81&120&27&9&40\\
 \hline
 5&243&363&81&27&121\\
 \hline
\end{tabular}
\end{center}
\caption{\label{table:ag_parameters} Parameters of REPRESSURED codes from affine geometries.}
\end{table}


For instance, let $q = 4, m = 5, d = m-1 = 4$. Then a resolvable FR code $\calC$ with $\theta = 1024, \alpha = 256, \beta = 64, \rho = 341, n = 1364$ exists. This code has $341$ parallel classes. Suppose that we wish to deploy a DSS with a repetition degree of $5$. We can simply pick $5$ parallel classes to form the node set. In the event of four node failures, we contact all the nodes in the intact parallel class and download $\beta = 64$ symbols from each of them, i.e., the code is resilient to four node failures. The code rate $R_{\calC}(k)$ is guaranteed to be at least $k \alpha - \binom{k}{2} \beta$ as any two nodes have at most $\beta$ symbols in common. This implies that these codes are universally good.


\begin{remark}
This approach provides us with a systematic way of designing codes with $\beta > 1$ that are resilient up to $\rho - 1$ failures. Furthermore, it can be seen that a system can be resilient to at most $\rho - 1$ failures, i.e., our approach is optimal from a resilience point of view.  It is known that there exist affine resolvable designs that are not Steiner systems, i.e., our class of codes is different from the Steiner system based codes considered in \cite{RR}.
\end{remark}

\subsection{Hadamard matrix based construction}
A second construction of affine resolvable designs can be obtained from Hadamard matrices or equivalently difference sets as discussed below. Consider a group $G$ of order $\theta$ and $D \subseteq G$ such that $|D| = \alpha$, with the property that every nonidentity element of $G$ can be expressed as a difference $d_1-d_2$ of elements of $D$ in exactly $\lambda$ ways. We refer to $D$ as a $(\theta,\alpha,\lambda)$-difference set.
\begin{lemma} {\it Quadratic residue difference set.} \cite{St}
Let $q=4m-1$ be an odd prime power and $G = \mathbb{F}_q$. Let $D=\{z^2: z \in \mathbb{F}_q,~z\neq0 \}=\{d_1, \cdots, d_k\}$ be the set of quadratic residues. Then $D$ is a $(4m-1, 2m-1, m-1)$-difference set in $(\mathbb{F}_q,+)$\footnote{$+$ denotes the additive operation over $\mathbb{F}_q$}.
\end{lemma}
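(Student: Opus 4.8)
The plan is to count, for each nonzero $g \in \mathbb{F}_q$, the number $\lambda_g$ of ordered pairs $(d_1, d_2) \in D \times D$ with $d_1 - d_2 = g$, and to show it equals $m-1$ independently of $g$. First I would record the cardinality: since exactly half of the $q-1$ nonzero field elements are squares, $|D| = (q-1)/2 = 2m-1$, which is the claimed block size $\alpha$. The core tool is the quadratic character $\chi$ on $\mathbb{F}_q$, defined by $\chi(t) = 1$ if $t$ is a nonzero square, $\chi(t) = -1$ if $t$ is a nonsquare, and $\chi(0) = 0$, so that the indicator of membership in $D$ is $(1+\chi(t))/2$ for every $t \neq 0$.

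Next I would write $\lambda_g = \sum_{y} [\,y \in D\,]\,[\,y+g \in D\,]$ (taking $d_2 = y$ and $d_1 = y+g$) and replace each indicator by $(1+\chi(\cdot))/2$, which is exact except at $y = 0$ and $y = -g$, where $0 \notin D$ but the formula reports $1/2$. Expanding $\frac{1}{4}\sum_y \bigl(1 + \chi(y) + \chi(y+g) + \chi(y)\chi(y+g)\bigr)$ and using $\sum_y \chi(y) = \sum_y \chi(y+g) = 0$, the sum reduces to $\frac{1}{4}\bigl(q + \sum_y \chi(y(y+g))\bigr)$. The remaining twisted sum is a standard quadratic character-sum evaluation: for $y \neq 0$ one factors $\chi(y(y+g)) = \chi(1 + g/y)$, and as $y$ runs over $\mathbb{F}_q^*$ the argument runs over all field elements except $1$, giving $\sum_{u \neq 1}\chi(u) = -\chi(1) = -1$. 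Hence the uncorrected quantity equals $(q-1)/4 = m - \tfrac12$.

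Finally I would subtract the two spurious half-contributions at $y=0$ and $y=-g$, namely $\frac{1+\chi(g)}{4} + \frac{1+\chi(-g)}{4}$. This is exactly where the hypothesis $q = 4m-1$, i.e.\ $q \equiv 3 \pmod 4$, becomes indispensable: it forces $-1$ to be a nonsquare, so $\chi(-1) = -1$ and $\chi(-g) = -\chi(g)$, whereupon the two correction terms collapse to $\tfrac14 \cdot 2 = \tfrac12$ and the $g$-dependence cancels precisely. Therefore $\lambda_g = (m - \tfrac12) - \tfrac12 = m-1$ for every nonzero $g$, which establishes that $D$ is a $(4m-1,\,2m-1,\,m-1)$-difference set.

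I expect the main obstacle to be not computational but conceptual: pinpointing where the parity condition enters. The character sum itself is insensitive to $q \bmod 4$, and only the boundary correction reveals that $\chi(-1) = -1$ is the one fact that makes $\lambda_g$ constant. Were $q \equiv 1 \pmod 4$ instead, the correction would retain a term proportional to $\chi(g)$, splitting the nonzero elements into two classes with different representation counts, so the conclusion would fail. This confirms that the assumption $q = 4m-1$ is genuinely essential rather than merely convenient.
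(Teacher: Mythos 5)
Your character-sum proof is correct in every detail: the indicator identity $[t\in D]=(1+\chi(t))/2$ for $t\neq 0$, the evaluation $\sum_{y\neq 0}\chi(y(y+g))=\sum_{u\neq 1}\chi(u)=-1$ via the substitution $u=1+g/y$, and the two boundary corrections at $y=0$ and $y=-g$ are all handled properly, with the cancellation $\chi(-g)=-\chi(g)$ correctly identified as the precise point where $q\equiv 3\pmod 4$ enters. Be aware, though, that the paper offers no proof of this lemma --- it is quoted from \cite{St} --- and the argument there is genuinely different and more elementary. That proof uses symmetry: multiplying a pair $(d_1,d_2)$ with $d_1-d_2=g$ by a nonzero square $s$ yields a pair representing $sg$, so $\lambda_g$ is constant on each of the two multiplicative cosets (squares and nonsquares); since $-1$ is a nonsquare when $q=4m-1$, swapping $d_1$ and $d_2$ gives a bijection between representations of $g$ and of $-g$, which merges the two cosets and makes $\lambda_g$ constant on all of $\mathbb{F}_q^*$. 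The value then follows from global counting with no character sums at all: $\lambda=\frac{|D|(|D|-1)}{q-1}=\frac{(2m-1)(2m-2)}{4m-2}=m-1$. What your analytic route buys in exchange for the extra machinery is exactly the refinement you note at the end: for $q\equiv 1\pmod 4$ your computation gives $\lambda_g=(q-5)/4$ when $g$ is a square and $\lambda_g=(q-1)/4$ when it is not, quantifying how the difference-set property fails, whereas the symmetry argument merely shows that its merging step breaks down.
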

For any $g \in G$, we define the \textit{translate} of $D$ by $g+D=\{g+d: d \in D\}$, and define the \textit{development} of $D$ by $\mbox{Dev}(D)=\{g+D: g\in G\}$. If $D$ is a $(\theta,\alpha,\lambda)$-difference set in $G$, then $(G,\mbox{Dev}(D))$ is a $(\theta, \rho,\alpha,\lambda)$-BIBD \cite{St}.

%
%

Let $(\Omega, V )$ be the $(4m-1, 2m-1, 2m-1, m-1)$-BIBD constructed by using a quadratic residue difference set. Let $\infty \notin \Omega$, and define for $V'=\{B\cup \{\infty\}: B \in V\}$. Then it can be shown \cite{St} that $(\Omega \cup \{\infty\}, V' \cup \{\Omega-B:B \in V\})$ is an affine resolvable $(4m, 4m-1, 2m, 2m-1)$-BIBD. Using the equations (\ref{bibd_eq_1}) and (\ref{bibd_eq_2}) it can be seen that this corresponds to a resolvable FR code with parameters $\theta = 4m, \alpha = 2m, \beta = m, d = 2, \rho = 4m-1$ and $n=8m-2$. 

\begin{example}
$D=\{1,2,4\}$ is a $(7, 3, 1)$-difference set in $\Omega=\bbf_7$. We can construct the Fano plane by using the difference set $D$ which is a $(7, 3, 3,1)$-BIBD. By applying the above construction we have the following parallel classes $P_i, i = 1, \dots 7$ and their corresponding storage nodes.
$$P_1 = \{ \{\infty 124\},\{0356\}\}$$
$$P_2 = \{ \{\infty235\},\{1460\}\}$$
$$P_3 = \{ \{\infty346\},\{2501\}\}$$
$$P_4 = \{ \{\infty450\},\{3612\}\}$$
$$P_5 = \{ \{\infty561\},\{4023\}\}$$
$$P_6 = \{ \{\infty602\},\{5134\}\}$$
$$P_7 = \{ \{\infty013\},\{6245\}\}$$

\end{example}
\begin{remark}
For this class of codes, $d$ is always 2. However, they offer more flexibility in the choice of $\beta$; unlike affine geometry based codes, we do not require $\beta$ to be a prime power.
\end{remark}

Table \ref{table:qrd_parameters} contains the parameters of this construction corresponding to some representative values of $m$.
\begin{table}
\begin{center}
\begin{tabular}{| c | c | c | c |c|}
  \hline
 $\theta = 4m$&$n=8m-2$&$\alpha = 2m$&$\beta = m$& $\rho = 4m-1$\\
  \hline
 8&14&4&2&7\\
 \hline
 12&22&6&3&11\\
  \hline
 20&38&10&5&19\\
  \hline
 24&46&12&6&23\\
  \hline
 28&54&14&7&27\\
  \hline
 32&62&16&8&31\\
  \hline
\end{tabular}
\end{center}
\caption{\label{table:qrd_parameters} Parameters of REPRESSURED codes from Hadamard matrices.}
\end{table}
For these codes $d=2$ which implies that $k \leq 2$. It can be seen that the code rate $R_{\calC}(k)$ is $3m$ for any Hadamard design based resolvable fractional code for $k=2$. In this case, since $3m$ is equal to $k \alpha - \binom{k}{2} \beta$, these codes are universally good.

The advantage of affine resolvable designs is that the overlap between blocks from different parallel classes is known exactly. This is not the case in general for resolvable designs that are not affine resolvable. Thus, if the design is not affine resolvable, we may not be able guarantee the $\beta$-recoverable property of the fractional codes. However, it is conceivable that general resolvable designs can be used for the design of FR codes. In the next section we present constructions of resolvable FR codes where $\beta = 1$, but the corresponding designs are not affine resolvable. We emphasize that these designs result in FR codes whose parameters are not achieved by \cite{RR}.

\section{REPRESSURED codes from Latin Squares}
\label{sec:general_resolv}

In this section, we consider resolvable FR codes with $\beta = 1$. We use mutually orthogonal Latin squares \cite{St} for the construction.
\begin{definition}A Latin square of order $s$ with entries from a $s$-set $\Omega$ is an $s \times s$ array $L$ in which every cell contains an element of $\Omega$ such that every row of $L$ is a permutation of $\Omega$ and every column of $L$ is a permutation of $\Omega$.
\end{definition}
\begin{definition} Suppose that $L_1$ and $L_2$ are Latin squares of order $s$ with entries from $\Omega_1$ and $\Omega_2$ respectively (where $|\Omega_1| = |\Omega_2|$). We say that $L_1$ and $L_2$ are orthogonal Latin squares if for every $x \in \Omega_1$ and for every $y \in \Omega_2$ there is a unique cell $(i,j)$ such that $L_1(i,j)=x$ and $L_2(i,j)=y.$
\end{definition}
Equivalently, one can consider the superposition of $L_1$ and $L_2$ in which each cell $(i,j)$ is occupied by the pair $(L_1(i,j), L_2(i,j))$. Then, $L_1$ and $L_2$ are orthogonal if and only if the resultant array has every value in $\Omega_1 \times \Omega_2$. A set of $r$ Latin squares $L_1, \dots, L_r$ of order $s$ are said to be mutually orthogonal if $L_i$ and $L_j$ are orthogonal for all $1 \leq i < j \leq r$.

We now demonstrate a procedure of constructing FR codes from mutually orthogonal Latin squares \cite{Yates}.
Let $\Omega=\{1, 2, \cdots, s^2\}$, and let $L_1, L_2, \cdots L_{r-2}$ be a set of $r-2$ mutually orthogonal Latin squares of order $s$ ($r-2 \leq s-1$).
\begin{itemize}
\item Arrange the elements of $\Omega$ in a $s \times s$ array $A$. Each row and each column of $A$ corresponds to a storage node (this gives us $2s$ nodes).
\item Note that  $L_i$ takes values in $\{1, \dots, s\}$. Within $L_i$ identify the set of $(i,j)$ pairs where a given value $z \in \{1, \dots, s\}$ appears. Create a storage node by including the entries of $A$ corresponding to the identified $(i,j)$ pairs.
\item Repeat this for each $L_i$ and all $z \in \{1, \dots, s\}$. This creates another $(r-2)s$ storage nodes.
\end{itemize}
Thus, a total of $rs$ storage nodes of size $s$ can be obtained. Of course one can choose fewer storage nodes if so desired.
\begin{lemma}
The construction procedure described above produces a resolvable fractional repetition code with $\theta = s^2, n=rs, d=\alpha = s, \rho = r$ and  $\beta = 1$.
\end{lemma}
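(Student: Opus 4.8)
The plan is to verify each claimed parameter directly from the construction and then to establish resolvability and $\beta=1$ recoverability using the defining properties of the mutually orthogonal Latin squares. First I would handle the bookkeeping. Since $\Omega = \{1, \dots, s^2\}$ we immediately get $\theta = s^2$. The node count splits as $s$ row-nodes, $s$ column-nodes, and $s$ value-nodes for each of the $r-2$ squares, giving $n = 2s + (r-2)s = rs$. Every row and every column of the $s\times s$ array $A$ has exactly $s$ cells, and by the Latin property each value $z \in \{1,\dots,s\}$ occurs exactly $s$ times in any $L_\ell$ (once per row), so every value-node also has size $s$; hence $\alpha = s$. For the repetition degree, each symbol occupies a unique cell $(i,j)$, which lies in exactly one row, one column, and, for each $L_\ell$, the single value-node indexed by $L_\ell(i,j)$; summing gives $\rho = 1 + 1 + (r-2) = r$.

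Next I would exhibit the resolution into $r$ parallel classes. The $s$ rows are pairwise disjoint and cover every cell, so they form one parallel class, and the same holds for the $s$ columns. For each Latin square $L_\ell$, its $s$ value-nodes partition the cells of $A$ (each cell carries a single $L_\ell$-value), so they too form a parallel class. This yields exactly $r$ parallel classes, establishing resolvability and confirming that the node set is partitioned as required by Definition \ref{resolvable fractional repetition code}.

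The crux is the $\beta=1$ recoverability, which I would derive from the single statement that any two nodes lying in \emph{distinct} parallel classes intersect in exactly one symbol. I would verify this by cases: a row and a column meet in one cell; a row (or column) and a value-node of $L_\ell$ meet in one cell because $L_\ell$ is Latin, so the chosen value appears exactly once in that row (column); and two value-nodes from distinct squares $L_\ell, L_{\ell'}$ indexed by values $z, z'$ meet in one cell because orthogonality forces the pair $(z,z')$ to occur in a unique cell. With this intersection fact in hand, recoverability follows as in the earlier resolvable-code discussion: if at most $\rho-1 = r-1$ nodes fail, some parallel class $P$ survives intact. A failed node $V_j$ lies in a different class, so each of the $\theta/\alpha = s$ blocks of $P$ meets $V_j$ in exactly one symbol; since $P$ partitions $\Omega$, this is a perfect matching of the $s$ symbols of $V_j$ onto the $s$ blocks of $P$. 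Downloading the single shared symbol from each block then recovers $V_j$ with $\beta = 1$ from $d = s$ surviving nodes.

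I expect the main obstacle to be the intersection-in-one-point claim, and within it the orthogonality case, since that is precisely where mutual orthogonality (rather than the mere Latin property) is indispensable; the counting and resolvability steps are routine. Once the uniform intersection property is established, the remaining arguments mirror the recovery argument already given for resolvable FR codes, and the value $d = s$ drops out because each surviving parallel class contains exactly $\theta/\alpha = s$ blocks.
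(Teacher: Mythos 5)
Your proof is correct and follows essentially the same route as the paper's: verify the parameter counts, exhibit the $r$ parallel classes from the rows, columns, and value-classes of each $L_\ell$, and use the Latin and orthogonality properties to show blocks from distinct parallel classes meet in exactly one symbol. You are in fact slightly more complete than the paper, which stops at the one-point intersection property and leaves the final $\beta=1$ recoverability step (repair from the intact parallel class with $d = s$) to its earlier general discussion of resolvable FR codes.
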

\begin{proof}
It is clear from the construction that $\theta = s^2$ and $n = rs$. Each storage node has $s$ symbols so that $\alpha = s$. We need to show that the code is resolvable. Towards this end note that it is evident that we obtain a parallel class by considering the nodes corresponding to the rows of $A$ (similar argument holds for the columns of $A$). Next the nodes obtained by considering Latin square $L_i$ also form a parallel class, since the set of elements obtained by considering the $(i,j)$ pairs corresponding to $z_1 \in \{1, \dots, s\}$ are distinct from those corresponding to  $z_2 \in \{1, \dots, s\}$, if $z_1 \neq z_2$. As we have $r$ parallel classes, we obtain $\rho = r$. Next, consider the overlap between any two storage nodes belonging to different parallel classes. As $L_i$ and $L_j$ are orthogonal, any entry $(k,l) \in [s]\times[s]$ appears exactly once in the superposition of $L_i$ and $L_j$, which implies that the overlap between storage nodes from different parallel classes corresponding to the $L_i$'s is exactly one element. Similarly, a block from a parallel class corresponding to $L_i$ has exactly one overlap with the blocks corresponding to the rows and columns of $A$.
\end{proof}
\begin{example} \label{latin}
Let $s=4$, and $r = 2$. Then, we have the following construction
$$A=\begin{array}{cccc}
1&2&3&4\\
5&6&7&8\\
9&10&11&12\\
13&14&15&16
\end{array}$$ ,
$$
L_1=\begin{array}{cccc}
1&2&3&4\\
2&1&4&3\\
3&4&1&2\\
4&3&2&1
\end{array} \mbox{~and}~
 L_2=\begin{array}{cccc}
1&2&3&4\\
3&4&1&2\\
4&3&2&1\\
2&1&4&3
\end{array} $$
where it can be verified that $L_1$ and $L_2$ are orthogonal.
Then we have the following parallel classes and corresponding storage nodes.
\small
$$P^{\text{rows}} = \{\{1,  2,  3,  4\},  \{5,  6,  7,  8\}, \{9,  10,  11,  12\},  \{13, 14,  15,  16\}\} $$
$$P^{\text{cols}} = \{\{1,  5,  9,  13\},  \{2,  6,  10,  14\},   \\ \{3,  7,  11,  15\},  \{4,  8,  12,  16\}\}$$
$$P^{L_1} = \{\{1,  6,  11,  16\},  \{2,  5,  12,  15\},  \\  \{3,  8,  9,  14\},  \{4,  7,  10,  13\}\}$$
$$P^{L_2} = \{\{1,  7,  12,  14\},  \{2,  8,  11,  13\},   \\ \{3,  5,  10,  16\},  \{4,  6,  9,  15\}\}$$

\end{example}

Note that in describing the above construction we assumed the existence of $r-2$ mutually orthogonal Latin squares. We now discuss the issue of the existence of such structures.

If $p$ is a prime number, $s$ is a positive integer, and $N = p^s$ then we can construct $N-1$ mutually orthogonal Latin squares as described below.
\begin{itemize}
\item[(i)] Define $L_a : \bbf_N \times \bbf_N \rightarrow \bbf_N$, by $(r, c) \mapsto ar + c$ (where the addition is over $\bbf_N$)
for all $a\in \bbf_N \setminus \{0\}$. Then $L_a$ is a Latin square since for a given row $r$ (or column $c$) the column (or row) location of an element $s$ is uniquely specified. 
\item [(ii)] For any $a, b \in \bbf_N \setminus \{0\}$, $L_a$ and $L_b$ are orthogonal since for given ordered pair $(s, t)$ the system $ar+c = s$, $br+ c = t$, determine $r = (a- b)^{-1}(s-t)$ and $c = s-ar$ uniquely.
\end{itemize}
\begin{example}
Let N=3. Then  $\mathbb{F}_3=\{0,1,2\}$, $L_1: x+y$ and $L_2: 2x+y$. The two orthogonal Latin squares of order 3 constructed by the above method are
$$L_1=\begin{array}{ccc}
0&1&2\\
1&2&0\\
2&0&1\\
\end{array},
~L_2=\begin{array}{ccc}
0&1&2\\
2&0&1\\
1&2&0\\
\end{array}$$
\end{example}
In general, the construction of orthogonal Latin squares is somewhat involved. However, the celebrated results of \cite{BSP}, demonstrate the construction of two orthogonal Latin squares for all orders $N \neq 2, 6$.
This immediately allows us to construct resolvable fractional repetition codes with the following parameters $n=4s, \theta=s^2,$ $d=\alpha=s,$ $\beta=1,$ and $\rho=4$ for any $s\neq 2, 6$.

It can be observed that this construction allows us to design FR codes that are not covered by those arising from Steiner systems. For instance, Let $\alpha=10$ and $\theta=100$. Then to construct a FR code we need use the Steiner system $S(2,10,100)$ which does not exist \cite{Lam}. However the above construction with two orthogonal Latin squares of order 10 provides us a resolvable fractional code with $\alpha=10$ and $\theta=100$.

\section{Concluding Remarks}
\label{sec:conc_remarks}
In this work we introduced REPRESSURED codes that are fractional repetition codes constructed from resolvable designs. 
Our work offers the following advantages with respect to the existing work.
\begin{itemize}
\item In \cite{RR}, they only considered FR codes with $\beta = 1$, i.e., codes where the new node downloads exactly one packet from the survivor nodes. In contrast, our constructions based on affine geometries and Hadamard designs (cf. Section \ref{sec:affine_resolv}), allow for a large family of codes where $\beta > 1$.
\item The resolvable nature of our codes allows for a natural tradeoff between the repetition degree $\rho$ and the number of parallel classes, i.e., we can obtain FR codes with higher or lower $\rho$ simply by including or removing parallel classes. This flexibility is lacking in the approach of \cite{RR}, where the entire Steiner system needs to be used. In particular, it is known that there exist Steiner systems that are not resolvable. For instance, Ray-Chaudhuri and Wilson showed that a resolvable $S(2,\alpha,\theta)$ exists if and only if  $\theta \equiv 3\mod{6}$ (\cite{CW}).
\item One of the open questions of \cite{RR} was the existence of FR codes with $\rho > 2$ that were not covered by Steiner systems. Our constructions in Section \ref{sec:general_resolv} provide such examples.
\item The work of \cite{kooG11} mentioned the usage of affine planes for designing distributed storage systems. Our work has considered a larger class of designs based on affine geometries, Hadamard designs and Latin squares.
\end{itemize}
It is to be noted that all our proposed constructions are universally good, i.e. $R_\calC (k) \geq k \alpha - \beta \binom{k}{2}$. Moreover, the repair process is particularly simple. With $\rho - 1$ failures, at least one parallel class is guaranteed to be intact. The new node can simply contact the nodes in the intact parallel class for regeneration. This property is likely to simplify the implementation of the proposed systems.

Future work would include the investigation of other classes of combinatorial designs and a more careful analysis of the maximum filesize of the proposed codes.

\bibliographystyle{IEEETran}

\end{document}